\documentclass[american,aps,pra,reprint,superscriptaddress, twocolumn,longbibliography,floatfix]{revtex4-2}
\usepackage[dvips]{graphicx}
\usepackage{amsmath,amssymb,physics,amsthm,mathrsfs,amsfonts,dsfont,booktabs}
\usepackage{subfigure, epsfig}
\usepackage{braket}
\usepackage{bbm}
\usepackage{enumerate}
\usepackage{qcircuit}
\usepackage{algorithm}
\usepackage{algpseudocode}
\usepackage{hyperref}
\usepackage[svgnames]{xcolor}
\hypersetup{
     colorlinks = true,
     linkcolor = red,
     anchorcolor = maroon,
     citecolor = blue,
     filecolor = red,
     urlcolor = magenta
     }
\usepackage{comment}
\usepackage[normalem]{ulem}

\usepackage{bm}
\usepackage{amssymb,amsbsy,amsmath}
\usepackage{multirow,amssymb,amsbsy,amsmath}
\usepackage{graphicx}
\usepackage{verbatim}
\usepackage{float}

\usepackage{amsmath}

\newcommand{\mc}[1]{\mathcal{#1}}
\DeclareMathOperator{\sq}{\mathrm{sq}}

\usepackage{varioref}
\labelformat{equation}{#1}

\labelformat{section}{#1}

\labelformat{figure}{#1}

\labelformat{proposition}{#1}

\labelformat{lemma}{#1}

\labelformat{theorem}{#1}

\labelformat{observation}{#1}

\labelformat{definition}{#1}

\labelformat{problem}{#1}

\labelformat{algorithm}{#1}

\labelformat{corollary}{#1}

\labelformat{statement}{#1}

\newtheorem{theorem}{Theorem}

\newtheorem{proposition}{Proposition}

\newtheorem{lemma}{Lemma}

\newtheorem{corollary}{Corollary}
\newtheorem{definition}{Definition}

\begin{document} 

\title{Hysteretic squashed entanglement in many-body quantum systems}

\author{Siddhartha Das}\email{das.seed@iiit.ac.in}
\affiliation{q4i, Centre for Quantum Science and Technology (CQST), Center for Security, Theory and Algorithmic Research (CSTAR), International Institute of Information Technology Hyderabad, Gachibowli 500032, Telangana, India}

\author{Alexander Yosifov}\email{alexanderyyosifov@gmail.com}
\affiliation{Clarendon Laboratory, University of Oxford, Parks Road, Oxford, OX1 3PU, United Kingdom}
\affiliation{School of Physical and Chemical Sciences, Queen Mary University of London, London E1 4NS, United Kingdom}

\author{Jinzhao Sun}\email{jinzhao.sun.phys@gmail.com}
\affiliation{School of Physical and Chemical Sciences, Queen Mary University of London, London E1 4NS, United Kingdom}

\date{\today}

\begin{abstract}

Entanglement in many-body quantum systems is distributed across spatial regions, where its structure often dictates the information-processing capabilities of the state. Yet, characterizing the entanglement structure, especially for mixed states, remains a challenge. In this work, we propose hysteretic squashed entanglement $T_{\sq}$, a conditional entanglement monotone that measures the genuine quantum correlations between two subregions, conditioned on a third region, in a many-body quantum state. $T_{\sq}$ is upper bounded by the convex-roof extension of quantum conditional mutual information and exhibits several desirable properties like monogamy, convexity, asymptotic continuity, faithfulness, and additivity for tensor-product states. We study the conditional entanglement generation in a one-dimensional transverse-field Ising model under quench, where we show that $T_{\sq}$ effectively squashes classical contributions and can detect genuine quantum correlations across both adjacent and long-range subsystems. We elucidate the utility of this measure as a robust quantifier of topological entanglement entropy for mixed states. This opens new operational resource-theoretic avenues for probing topological order and criticality.

\end{abstract}

\maketitle

Characterizing how quantum correlations are distributed across spatial regions is a major problem in quantum many-body physics \cite{horodecki2009quantum,amico2008entanglement,zeng2015quantum}, with applications to computation, communication, error correction, and metrology \cite{d2001using,DBWH21,PhysRevLett.70.1895,DW19,brun2006correcting,PhysRevA.52.R2493,pezze2021entanglement}. However, in extended open systems correlations between two regions are typically mediated by surrounding parties and mixed with classical contributions. Scale also plays a role here as short-range quantum correlations among adjacent regions coexist with long-range ones between distant regions that are relayed through intermediate subsystems. This raises a fundamental question: how to quantify the genuine quantum correlations between two spatial regions, conditioned on an intermediate region that can mediate information? Addressing this is essential for understanding non-equilibrium quantum dynamics and emergent order in many-body systems~\cite{zeng2015quantum}.

A natural starting point is the quantum conditional mutual information (QCMI)~\cite{LR73} which has found many applications in informational aspects of many-body quantum systems~\cite{zeng2015quantum,svetlichnyy2022decay,Lev24,DGP24}. The QCMI $I(A;C|B)_{\rho}$ of a quantum state $\rho_{ABC}$ quantifies the total correlations between regions $A$ and $C$, conditioned on $B$. Later, that became the basis for squashing out correlations in a state of a composite system that can be relayed by some compatible state extension to determine intrinsic or genuine quantum correlations present in the system. %Later, that became the basis for squashing, where correlations that can be reproduced by conditioning on a compatible extension system E are regarded as non-intrinsic and are hence “squashed”.
This idea underlies squashed entanglement $E_{\sq}$ \cite{CW04} that measures bipartite entanglement and its generalization squashed quantum non-Markovianity $N_{\sq}$~\cite{GPG+25}, used for studying genuine non-Markovianity in tripartite quantum states and information backflow in open quantum systems~\cite{BGG+25}. Despite recent success in defining faithful and operationally meaningful measures of quantum correlations, current measures cannot distinguish quantum correlations that are genuinely shared between distant regions from those that are conditionally relayed through intermediate region in a many-body quantum state. In other words, there is no information-theoretic measure that quantifies the irreducible conditional quantum correlations between spatial subregions of a many-body state~\cite{MRCI23}. This gap is especially pronounced in mixed state regimes, where mediation and mixing are basic features of the correlation structure.

Here, we introduce \textit{hysteretic squashed entanglement} $T_{\sq}$, which for a (quadripartite) many-body state quantifies the entanglement between two subregions, conditioned on the third and any compatible extension system, while the fourth subregion remains a silent spectator. More formally, $T_{\sq}(A;C|B)_{\rho}$ of a state $\rho_{ABCD}$ quantifies the conditional entanglement between $A$-$C$, when $D$ is silent (excluded), that cannot be relayed through an intermediate conditioning region $B$ which may include any compatible extension system [see Eq.~\eqref{def:df1}]. And importantly, it is distinct from both bipartite and genuine multipartite entanglement (GME). Even though $T_{\sq}$ (similar to $E_{\sq}$ and $N_{\sq}$) is numerically difficult to compute in general, we show that its utility lies in its properties that can shed light on the structure of the states and possible physical ramifications for many-body systems. 

We show that $T_{\sq}$ satisfies several desirable properties of a conditional entanglement monotone~\cite{YHW08,BGG+25}: convexity, faithfulness, asymptotic continuity, additivity, and monogamy. These properties clarify the intrinsic structure of a many-body quantum state when its $T_{\sq}$ is approximately vanishing. Furthermore, $T_{\sq}$ can provide a robust diagnostic of topological entanglement entropy (TEE) in mixed states~\cite{KP06,LW05,LW06}, and lays the groundwork for a resource-theoretic treatment of topological order beyond pure states~\cite{WSMG25,EC25}. As another application, we perform numerical study of the conditional entanglement generation in a quenched 1D Ising model, where $T_{\sq}$ detects quantum correlations in regimes where GME-specific diagnostics vanish.

Before we present our main results, we remark that $T_{\sq}$ admits an operational interpretation within a secure communication protocol adapted from the one-time quantum conditional pad~\cite{SWW20,GPG+25}. Consider a scenario where multiple copies of a state $\rho_{ABCD}$ are distributed such that Alice holds $A$, Charlie holds $C$, and an eavesdropper Eve possesses $B$ and all compatible extensions $E$, but lacks access to $D$. Alice and Charlie are connected by an ideal quantum channel whose output is also accessible to Eve. In this setting, $T_{\sq}(A;C|B)_{\rho}$ is exactly twice the optimal rate at which Alice can send private messages to Charlie. The proof follows the same construction as \cite{SWW20} (see also Theorem 4 in \cite{GPG+25}).

\textit{Preliminaries.---} We briefly introduce some measures of interest. For a tripartite state $\rho_{ABC}$, QCMI is defined as $I(A;C|B)_\rho := I(A;BC)_\rho - I(A;B)_{\rho}$, where $I(A;B)_\rho := S(A)_\rho + S(B)_\rho - S(AB)_\rho$ is the quantum mutual information (QMI) and $S(\cdot) := -\mathrm{tr}[\cdot \log \cdot]$ is the von Neumann entropy.
% , and $\log$ is taken w.r.t.~base $2$. 
The nonnegativity of QCMI is equivalent to the strong subaddivity of the entropy~\cite{LR73}, where its strengthening shows that for an arbitrary state $\rho_{ABC}$, there exists a universal recovery map $\mathcal{R}_{B\to BC}$~\cite{JRS+18}, such that $I(A;C|B)_{\rho}\geq -\log F(\rho_{ABC},\mathcal{R}_{B\to BC}(\rho_{AB}))\geq 0$, where $F(\rho,\sigma):=\norm{\sqrt{\rho}\sqrt{\sigma}}_{1}^2$ is the fidelity between states $\rho,\sigma$. For a bipartite state $\rho_{AB}$, squashed entanglement is defined as $E_{\sq}(A;C):=\frac{1}{2}\inf\{I(A;C|E)_{\rho}:\tr_E[\rho_{ACE}]=\rho_{AC}\}$. For a tripartite state $\rho_{ABC}$, squashed quantum non-Markovianity is defined as $N_{\sq}(A;C|B)_\rho := \tfrac{1}{2} \inf \{ I(A;C|BE)_\rho : \mathrm{tr}_E[\rho_{ABCE}] = \rho_{ABC}\}$.

\textit{Properties.---} We now formally define our measure.

\begin{definition}
\label{def:df1}
    For a relevant quadripartite density operator $\rho_{ABCD}$, its (squashed) hysteretic squashed entanglement $T_{\sq}(A;C|B)_{\rho}$ reads
\begin{equation}\label{eq:sqtee}
    T_{\sq}(A;C|B)_{\rho}=\frac{1}{2}\inf_{\substack{\sigma_{ABCDE}:\\ \tr_E(\sigma_{ABCDE})=\rho_{ABCD}}}I(A;C|BE)_{\sigma},
\end{equation}
where the optimization is with respect to all possible state extensions $\sigma_{ABCDE}$ of $\rho_{ABCD}$. 
\end{definition}

It follows from Definition~\ref{def:df1} that $T_{\sq}(A;C|B)_{\rho}\leq \log\min\{|A|,|C|\}$ for a state $\rho_{ABCD}$ and the inequality is saturated iff its marginal state $\rho_{AC}$ is a maximally entangled pure state of Schmidt rank $d=\min\{|A|,|C|\}$. We note that to evaluate $T_{\sq}$ with finite-dimensional $ABCD$, the infimum can be restricted (without loss of generality) to finite-dimensional $E$; whereas for infinite-dimensional $ABCD$, $E$ is also infinite-dimensional~\cite{CW04,BCY11}. 
In this work, we focus on many-body systems composed of finite-dimensional regions. Physically, $T_{\sq}$ quantifies the irreducible $A$-$C$ correlations that persist under conditioning on $B$ and any compatible $E$, and disregarding $D$; meaning that $T_{\sq}=0$ iff the $A$-$C$ correlations can be mediated through $B$ and some compatible $E$ that is hidden. 

$T_{\sq}$ exhibits the following desirable properties.

\begin{proposition}
\label{thm:prop}
    For a many-body quantum system $ABCD$ composite of four regions, $T_{\sq}$ satisfies the following properties:\\~\\
    1) Convexity: For a state $\rho_{ABCD}$ expressed as a convex mixture $\sum_{x}p_X(x)\rho^x_{ABCD}$ of quantum states
        \begin{equation}
            T_{\sq}(A;C|B)_{\rho}\leq \sum_xp_X(x)T_{\sq}(A;C|B)_{\rho^x}.\\
        \end{equation}
    2) Faithfulness: For a state $\rho_{ABCD}$, $T_{\sq}(A;C|B)_{\rho}=0$ iff there exists a state extension $\rho_{ABCDE}$ and a universal recovery map $\mathcal{R}_{BE\to BCE}$, such that 
    \begin{equation}
    \mathcal{R}_{BE\to BCE}(\rho_{ABE})=\tr_D[\rho_{ABCDE}].\\
    \end{equation}
    3) Asymptotic continuity: For two finite-dimensional states $\rho_{ABCD}$ and $\sigma_{ABCD}$ close in trace distance, $\frac{1}{2}\norm{\rho-\sigma}\leq \varepsilon\in[0,1]$, we have
        \begin{equation}
            \abs{T_{\sq}(A;C|B)_{\rho}-T_{\sq}(A;C|B)_{\sigma}}\leq f(d,\varepsilon),
        \end{equation}
      where $f(d,\varepsilon)=2\sqrt{\varepsilon}\log d+(1+2\sqrt{\varepsilon})h_2\left(\frac{2\sqrt{\varepsilon}}{1+2\sqrt{\varepsilon}}\right)$ for $d=\min\{|A|,|C|\}$ and Shannon's binary entropy $h_2(p):=-p\log p-(1-p)\log (1-p)$, and $\lim_{\varepsilon\to 0^+}f(d,\varepsilon)=0$. \\
    4) Monogamy: For a quantum state $\rho_{A_1A_2BCD}$ 
        \begin{equation}
        T_{\sq}(A_1A_2;C|B)_{\rho}\geq T_{\sq}(A_1;C|B)_{\rho}+T_{\sq}(A_2;C|B)_{\rho}.\\
        \end{equation}
    5) Additive under tensor-product states: For a quantum state $\rho_{A_1A_2B_1B_2C_1C_2D_1D_2}$
        \begin{equation}
         T_{\sq}(A_1;C_1|B_1)_{\rho}+  T_{\sq}(A_2;C_2|B_2)_{\rho}\geq T_{\sq}(A;C|B)_{\rho},
        \end{equation}
        and the inequality saturates if $\rho_{A_1A_2B_1B_2C_1C_2D_1D_2}=\sigma_{A_1B_1C_1D_1}\otimes\tau_{A_2B_2C_2D_2}$, i.e.,
        \begin{equation}
        T_{\sq}(A;C|B)_{\rho}= T_{\sq}(A_1;C_1|B_1)_{\sigma}+ T_{\sq}(A_2;C_2|B_2)_{\tau},
        \end{equation}
        assuming $A=A_1A_2$, $B=B_1B_2$, $C=C_1C_2$.
\end{proposition}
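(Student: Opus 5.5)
The key observation is that $T_{\sq}(A;C|B)_\rho$ is just the squashed quantum non-Markovianity $N_{\sq}(A;C|B)$ of the marginal $\rho_{ABC}$. Any extension $\sigma_{ABCDE}$ of $\rho_{ABCD}$ restricts to an extension $\sigma_{ABCE}$ of $\rho_{ABC}$ with the same value of $I(A;C|BE)$. Conversely, given an extension $\omega_{ABCE}$ of $\rho_{ABC}$, I take a purification $\psi_{ABCDD'}$ of $\rho_{ABCD}$ and a purification $\phi_{ABCE E'}$ of $\omega_{ABCE}$. Both purify $\rho_{ABC}$, so by Uhlmann's theorem an isometry maps the purifying system $DD'$ to $EE'$ (enlarging spaces if needed). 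This produces a state on $ABCDD'$ whose $ABCD$ marginal is $\rho_{ABCD}$, and in which a system $\tilde E$ (the image, inside $DD'$, of $E$, modulo isometric embedding) reproduces $\omega_{ABCE}$. Since $D$ is a subsystem of the purifier, I will realize this carefully as follows: apply the isometry $V:\,DD'\to EE'$, and note that $\tilde E := D'\otimes$(an ancilla). More simply, I will use the Stinespring picture. Every extension of $\rho_{ABC}$ arises as $\omega_{ABCE}=\mathcal{N}_{R\to E}(\psi_{ABCR})$ with $R$ a purifying system. Choosing $R=DD'$, the state $\mathrm{id}_{ABC}\otimes(\mathcal{V}_{DD'\to D\,DD'E}\,)$, obtained by first copying nothing but applying a Stinespring dilation of $\mathcal N$ acting on a coherent copy, is not cloning-legal. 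I therefore expect this direction to be the subtle step, and I will instead simply work with the upper bound where needed and avoid claiming equality. The properties then follow by adapting the $N_{\sq}$ and $E_{\sq}$ proofs directly to the extension-of-$ABCD$ definition.

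\textbf{Convexity.} Given near-optimal extensions $\sigma^x_{ABCDE}$ of each $\rho^x$, I form $\sigma=\sum_x p_X(x)\sigma^x_{ABCDE}\otimes|x\rangle\!\langle x|_X$ and use $E'=EX$. This is an extension of $\rho$, and $I(A;C|BEX)_\sigma=\sum_x p_X(x)I(A;C|BE)_{\sigma^x}$.

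\textbf{Additivity.} Subadditivity follows by taking the tensor product of extensions and splitting the QCMI via the chain rule; independence makes the cross terms vanish. For superadditivity on product states $\sigma\otimes\tau$, I take any extension $\omega_{A_1A_2B_1B_2C_1C_2D_1D_2E}$. The chain rule gives
\begin{equation*}
I(A_1A_2;C_1C_2|B_1B_2E)\ge I(A_1;C_1|B_1\,[B_2A_2C_2E])+I(A_2;C_2|B_2\,[B_1E]),
\end{equation*}
which I obtain by expanding in the $E_{\sq}$ additivity style (Christandl–Winter) and dropping nonnegative terms. Here $B_2A_2C_2E$ is a legitimate extension system for $\sigma$ (its marginal on $A_1B_1C_1D_1$ is $\sigma$), and similarly $B_1E$ for $\tau$. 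Getting the exact chain-rule decomposition right is the main bookkeeping obstacle. \textbf{Monogamy} is the same chain rule, $I(A_1A_2;C|BE)=I(A_1;C|BE)+I(A_2;C|BA_1E)$, with $A_1E$ serving as an extension for the second term.

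\textbf{Faithfulness.} The "if" direction: recoverability $\mathcal R_{BE\to BCE}(\rho_{ABE})=\rho_{ABCE}$ implies, by data processing, $I(A;C|BE)=0$. The "only if" direction: $T_{\sq}=0$ gives a sequence of extensions with QCMI tending to zero. The fidelity bound from \cite{JRS+18} then gives approximate recovery. The hard part is passing to the limit, since $E$ may be unbounded. Following the $N_{\sq}$ faithfulness treatment, I will either restrict to extensions where the infimum is attained or interpret the condition as a limit, arguing compactness of the set of extensions when the dimension of $E$ is bounded.

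\textbf{Asymptotic continuity.} I will follow the Alicki–Fannes–Winter route used for $E_{\sq}$. Let $\psi$ and $\phi$ be purifications of $\rho$ and $\sigma$ on a common purifying system $R$ with purified distance at most $2\sqrt\varepsilon$ (Fuchs–van de Graaf/Uhlmann). An extension of $\rho$, namely $\mathcal N_{R\to E}(\psi)$, is mapped by the same channel to an extension of $\sigma$, and the two extensions are within $2\sqrt\varepsilon$ in trace distance. Winter's conditional-mutual-information continuity bound, $|I(A;C|BE)_1-I(A;C|BE)_2|\le 2\delta\log d+2(1+\delta)h_2(\delta/(1+\delta))$ with $d=\min\{|A|,|C|\}$, then yields $f(d,\varepsilon)$ after halving and symmetrizing. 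This requires that every extension has the form $\mathcal N_{R\to E}(\psi)$, which is the standard Stinespring fact and the main technical obstacle.
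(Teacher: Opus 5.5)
\paragraph{Overall.} The paper does not prove these items itself. It defers to Lemma~2 of \cite{GPG+25}, i.e.\ to the $N_{\sq}$ arguments. Your plan of redoing those proofs with $D$ kept inside the extended state is the right way to make that deferral honest, and most of it works: convexity, monogamy, additivity on product states, the ``if'' half of faithfulness, and asymptotic continuity all go through. For continuity, representing every extension as $\mathcal N_{R\to E}(\psi_{ABCDR})$ is standard (purify the extension and apply Uhlmann), not an obstacle.

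\paragraph{Faithfulness, ``only if''.} This is the genuine gap. None of your proposed fixes works:
\begin{itemize}
\item Assuming the infimum is attained presupposes the conclusion.
\item Reading the condition as a limit changes the statement.
\item Compactness is unavailable, because there is no a priori bound on $|E|$. The approximate recovery maps that \cite{JRS+18} gives along a minimizing sequence live on spaces of growing dimension and need not converge.
\end{itemize}
Already for trivial $B$ and $D$, this direction is the faithfulness of squashed entanglement. That needs the Brand\~ao--Christandl--Yard theorem, or the Li--Winter argument, which iterates the recovery maps to build $k$-extensions of $\rho_{AC}$ in a fixed finite-dimensional space where compactness does apply. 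For nontrivial $B$ that trick only controls $\rho_{AC}$, because each application of $\mathcal R_{BE\to BCE}$ overwrites $B$. It yields separability of $\rho_{AC}$, which does not imply $T_{\sq}=0$: in the XOR example below, $\rho_{AC}$ is a product yet $T_{\sq}=\frac12$. So you need a separate attainment or closedness argument for the set of $\rho_{ABCD}$ that admit an exact Markov extension $A$--$BE$--$C$. Your sketch does not supply one, and the paper only cites \cite{GPG+25} here.

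\paragraph{Opening ``key observation''.} It is not merely hard to prove; it is false. Take $\rho_{ABC}=\frac14\sum_{a,c}|a\rangle\langle a|_A\otimes|a\oplus c\rangle\langle a\oplus c|_B\otimes|c\rangle\langle c|_C$ and purify it by $D$. Every extension of the pure $\rho_{ABCD}$ is a product, so $T_{\sq}(A;C|B)=\frac12 I(A;C|B)=\frac12$. Meanwhile a classical copy $E$ of $a$ gives $N_{\sq}(A;C|B)_{\rho_{ABC}}=0$. Only $N_{\sq}\le T_{\sq}$ holds (Lemma~\ref{lem:lower_bounds}), so delete that sentence rather than hedge it.

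\paragraph{Additivity, item 5).} Tensoring extensions gives the first inequality only when $\rho=\sigma\otimes\tau$. For general $\rho$ that inequality is false. Take $\rho=\Phi_{A_1C_2}\otimes\Phi_{A_2C_1}$ with all $B_i,D_i$ trivial: both left-hand terms vanish because $\rho_{A_iC_i}$ is a product, while $T_{\sq}(A;C|B)_\rho=\frac12 I(A_1A_2;C_1C_2)=2$. Your chain-rule bound never uses the product structure. With each term read on its marginal, it proves the reverse inequality (superadditivity) for every $\rho$. What your arguments correctly establish is therefore superadditivity in general and equality on tensor products.

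\paragraph{Monogamy reading.} Using $A_1E$ as an extension for $T_{\sq}(A_2;C|B)$ is legitimate only if that quantity is evaluated on the marginal $\rho_{A_2BCD}$, with $A_1$ discarded rather than placed in the silent region. State this explicitly. If $A_1$ counts as a spectator, monogamy fails: the four-qubit GHZ state on $A_1A_2CD$ with $B$ trivial gives $\frac12<\frac12+\frac12$.
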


Proposition~\ref{thm:prop} comes from Lemma 2 of~\cite{GPG+25}. Here: 1) ensures probabilistic mixing cannot create hysteretic squashed entanglement from states with $T_{\sq}=0$; 2) means $T_{\sq}=0$ exactly for states without irreducible conditional entanglement, and $T_{\sq}>0$ otherwise; 3) guarantees robustness because the less distinguishable two states are, the closer their $T_{\sq}$ values must be; 4) shows that conditional entanglement cannot be freely shared across multiple parties; while 5) implies $T_{\sq}$ does not increase by simply adjoining two states but equals the sum of their individual values. The consequence of 4) is also observed numerically in Fig.~\ref{fig:fig1}. 

Overall, Proposition~\ref{thm:prop} shows that $T_{\sq}$ captures a physically consistent notion of conditional entanglement in extended systems: it is stable under noise, cannot be freely shared among subsystems, and characterizes when global correlations can be reconstructed from local data. For instance, if $\rho_{ABCD}$ is a genuinely entangled pure state, then there exist $A$-$C$ correlations that are mediated, as well as ``shielded'' by $B$ and $D$; entanglement is hence shared among all the subsystems $ABCD$ without any leakage to the environment. 

\begin{table}[htbp]
\centering
\begin{tabular}{lcc}
\toprule
 & GHZ state ($\Phi_n$) & W state ($\Psi_n$) \\
\midrule
$n=3$           & 1 & 0.918 \\
$n=4$           & 0 & 0.377 \\
$n=5$           & 0 & 0.249 \\
$n \ge 4$       & 0 & $2h_2(\frac{2}{n}) - h_2(\frac{1}{n}) - h_2(\frac{3}{n})$ \\
\bottomrule
\end{tabular}
\caption{Comparison of $I(A;C|B)$ for $n$-party GHZ and W states. $h_2(p)$ denotes the binary entropy.}
\label{tab:qcmi_ghz}
\end{table}
To further elaborate, let $\Phi_n$ and $\Psi_n$ denote, respectively, $n$-qubit GHZ and W states~\cite{DVC00,HSD15,DBWH21}, and let this system be partitioned as $A$-$B$-$C$-$D$ for $n\geq 4$, where $A,B,C$ are each single qubit and the remaining qubits are in $D$. When $n=3$ for both GHZ and W states, $D$ is in a product state with $ABC$. Since $\Phi_n$ and $\Psi_n$ are pure, we get $T_{\sq}(A;C|B)_{\Phi_n}=\frac{1}{2}I(A;C|B)_{\Phi_n}$ and $T_{\sq}(A;C|B)_{\Psi_n}=\frac{1}{2}I(A;C|B)_{\Psi_n}$. The values of $I(A;C|B)$, which are double that of the respective $T_{\sq}$ for $n$-partite GHZ and W states, are shown in Table~\ref{tab:qcmi_ghz}. This is consistent with the fact that each two-qubit marginal of the W state remains entangled but tracing out even a single qubit from the GHZ state makes it separable. In contrast to the GHZ state, the QCMI for the W state remains strictly positive for any finite $n$, irrespective of how $n$-qubit W and GHZ states are partitioned into $A$-$B$-$C$-$D$. This non-vanishing $I(A;C|B)_{\Psi_n}$ reflects the intrinsic nature of W-type entanglement, which is spread across all pairs and triplets in a way that precludes reconstruction from local marginals. Even upon tracing out the observer qubit $D$, the resulting mixed state $\rho_{ABC}$ retains bipartite entanglement across all remaining partitions. Consequently, the intermediary region $B$ fails to act as a perfect shield (or a quantum Markov buffer) between regions $A$ and $C$.

Finally, we prove $T_{\sq}$ generalizes to $E_{\sq}$ and $N_{\sq}$ for relaxed spatial constraints (see \hyperref[sec:SM]{Supplemental Materials}).

\begin{lemma}\label{lem:lower_bounds}
     Suppose $\rho_{AC}$ and $\rho_{ABC}$ are reduced states of $\rho_{ABCD}$. The squashed quantum correlations $E_{\sq}$ of $\rho_{AC}$, $N_{\sq}$ of $\rho_{ABC}$, and $T_{\sq}$ of $\rho_{ABCD}$ can then be ordered as
\begin{align}
E_{\sq}(A;C)_{\rho}\leq N_{\sq}(A;C|B)_{\rho} \nonumber
&\leq T_{\sq}(A;C|B)_{\rho}\\ 
&\leq \frac{1}{2}I(A;C|D)_{\rho}. 
\end{align}
\end{lemma}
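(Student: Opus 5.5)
The plan is to prove the three inequalities one at a time. Each one follows from a feasibility argument: an admissible extension for one optimization is turned into an admissible extension for the next, with the same value of the conditional mutual information. No continuity or recovery-map machinery is needed, only the definitions of $E_{\sq}$, $N_{\sq}$ and $T_{\sq}$ (Definition~\ref{def:df1}) and one duality identity for pure states.

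\emph{Step 1: $E_{\sq}(A;C)_\rho\leq N_{\sq}(A;C|B)_\rho$.} Fix any extension $\sigma_{ABCE}$ of $\rho_{ABC}$. Regard the joint system $E':=BE$ as an extension system for $\rho_{AC}$; this is admissible because $\tr_{BE}[\sigma_{ABCE}]=\rho_{AC}$. Then $I(A;C|BE)_\sigma=I(A;C|E')_\sigma$ is a feasible value in the infimum defining $E_{\sq}(A;C)_\rho$. Taking the infimum over $\sigma_{ABCE}$ gives the first inequality.

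\emph{Step 2: $N_{\sq}(A;C|B)_\rho\leq T_{\sq}(A;C|B)_\rho$.} Fix any extension $\sigma_{ABCDE}$ of $\rho_{ABCD}$. Its marginal $\sigma_{ABCE}=\tr_D[\sigma_{ABCDE}]$ is an extension of $\rho_{ABC}$. The quantity $I(A;C|BE)$ depends only on the $ABCE$ marginal, so it takes the same value for both states. Hence every value in the infimum defining $T_{\sq}$ also appears in the infimum defining $N_{\sq}$. (The two measures are in fact equal, since every extension of $\rho_{ABC}$ can be dilated to an extension of $\rho_{ABCD}$, but only the inequality is needed.)

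\emph{Step 3: $T_{\sq}(A;C|B)_\rho\leq \tfrac12 I(A;C|D)_\rho$.} Here I exhibit one specific extension. Let $\psi_{ABCDR}$ be a purification of $\rho_{ABCD}$ and take $E:=R$, so that $\sigma_{ABCDE}=\psi_{ABCDR}$ is admissible. For any pure state $\psi_{ACXY}$ one has $S(AX)=S(CY)$, $S(CX)=S(AY)$, $S(X)=S(ACY)$ and $S(ACX)=S(Y)$. Therefore
\begin{equation}
I(A;C|X)_\psi=S(AX)+S(CX)-S(X)-S(ACX)=S(CY)+S(AY)-S(ACY)-S(Y)=I(A;C|Y)_\psi .
\end{equation}
Apply this with $X=BR$ and $Y=D$. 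This gives $I(A;C|BR)_\psi=I(A;C|D)_\psi=I(A;C|D)_\rho$, and the infimum is at most this value.

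The only step that requires any idea is Step 3. There one must see that the purifying system should be placed on the conditioning side $BE$, so that the silent region $D$ becomes the complementary conditioning system. Without the pure-state duality identity, the bound by $I(A;C|D)$ would not be apparent. The rest is bookkeeping of which reduced states coincide.
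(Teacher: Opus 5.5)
Your proof is correct and is essentially the paper's: Steps 2 and 3 are its containment-of-extensions argument and its purification-plus-duality argument, and Step 1 writes out the argument that the paper only cites for the first inequality. Do, however, delete the parenthetical claim in Step 2 that $N_{\sq}=T_{\sq}$, because it is false in general. For the pure state $\rho_{ABCD}=|0\rangle\langle 0|_B\otimes|\mathrm{GHZ}\rangle\langle\mathrm{GHZ}|_{ACD}$, every extension is a product $\rho_{ABCD}\otimes\omega_E$, so $T_{\sq}(A;C|B)_\rho=\tfrac12 I(A;C)_\rho=\tfrac12$. Extending $\rho_{ABC}$ instead by a classical copy $E$ of the shared bit gives $I(A;C|BE)=0$, hence $N_{\sq}(A;C|B)_\rho=0$. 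The classical gluing intuition fails quantumly: an extension of $\rho_{ABC}$ need not be compatible with $\rho_{ABCD}$, and this is exactly why the silent region $D$ matters.
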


This formalizes a hierarchy relation in the context of constrained optimization; as more spatial regions are included $AC\rightarrow ABC\rightarrow ABCD$, the set of valid extensions $\sigma_{ABCD}\subseteq \sigma_{ABC}\subseteq \sigma_{AC}$ shrinks, where $\sigma_{X}$ denotes the set of state extensions $\rho_{XE}$, such that $\mathrm{tr}_{E}[\rho_{XE}]=\rho_{X}$. From the definitions of $E_{\sq}$ and $N_{\sq}$~\cite{CW04,GPG+25,BGG+25}, it is easy to see that an extension compatible with the global state $\rho_{ABCD}$ is automatically compatible with the marginals $\rho_{ABC}$ and $\rho_{AC}$. 

Note that $T_{\sq}$ is related to $N_{\sq}$ but formally distinct as the latter is defined for tripartite state while the former for a quadripartite state. There is no silent spectator $D$ present in the context of $N_{\sq}$ which is crucial in quantifying hysteretic entanglement between two subregions $A$-$C$ relative to $B$ in a quantum state $\rho_{ABCD}$, see Table~\ref{tab:qcmi_ghz}. 

\textit{Topological order of mixed states.---} It is well known~\cite{zeng2015quantum,Lev24} that for a gapped quantum many-body system in topologically ordered pure ground state $\psi_{ABCD}$, where the configuration $A$-$B$-$C$ follows the Levin-Wen (LW) scheme, the TEE is $\gamma=\frac{1}{2}I(A;C|B)_{\psi}$~\cite{KP06,LW05,LW06}. Here, $\gamma=0$ corresponds to a quantum Markov chain. Equivalently, $\gamma=0$ for $\psi_{ABCD}$ if there exists a universal recovery map $\mc{R}_{B\to BC}$ that can perfectly recover the reduced state $\psi_{ABC}$ from $\psi_{AB}$. For a mixed state $\rho_{ABCD}$, however, correlations may be mediated through an intermediate party, and an extension system $E$ must be introduced. Here, $\gamma=0$ is related to the existence of a state extension $\rho_{ABCDE}$ and a universal recovery map $\mc{R}_{BE\to BCE}$, such that $\mc{R}(\rho_{ABE})=\rho_{ABCE}$; the regions here are configured as $A$-$BE$-$C$. The convex-roof extension of QCMI (dubbed co(QCMI)), see Eq.~\eqref{eq:coqcmi}, was recently used~\cite{WSMG25} to study topological order for mixed states. 

$T_{\sq}(A;C|B)_{\rho}$ is a well-founded diagnostic of topological order in terms of TEE of a mixed state due to desirable properties of the conditional entanglement it satisfies. For pure state $\psi_{ABCD}$, $T_{\sq}(A;C|B)_{\psi}={\rm co(QCMI)[\psi_{ABCD}]}=\frac{1}{2}I(A;C|B)_{\psi}=\gamma$. For relevant many-body system in a mixed state $\rho_{ABCD}$, where regions $A,B,C$ are as per LW scheme, we refer to $T_{\sq}(A;C|B)_{\rho}$ as the hysteretic TEE. The following theorem states that $T_{\sq}$ is upper bounded by co(QCMI). We provide detailed proof in Appendix \hyperref[app:C]{C}.

\begin{theorem}
\label{thm:thm1}
    For an arbitrary relevant state $\rho_{ABCD}$, the hysteretic TEE $T_{\sq}(A;C|B)_{\rho}$ is upper bounded by the co(QCMI)
\begin{equation}
   0\leq T_{\sq}(A;C|B)_{\rho}\leq {\rm co(QCMI)}[\rho_{ABCD}].
   \label{eq:coqcmi2}
\end{equation}
\end{theorem}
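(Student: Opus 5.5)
Recall that co(QCMI) is the convex roof of half the QCMI, taken over pure-state decompositions of the global state:
\begin{equation*}
{\rm co(QCMI)}[\rho_{ABCD}] = \inf_{\{p_x,\psi^x\}}\sum_x p_x\,\tfrac{1}{2}I(A;C|B)_{\psi^x},
\end{equation*}
where the infimum runs over ensembles with $\sum_x p_x \psi^x_{ABCD}=\rho_{ABCD}$ and each $\psi^x$ pure. If Eq.~\eqref{eq:coqcmi} does not include the factor $\tfrac12$, I would carry it through so as to match the normalization used in the text, where ${\rm co(QCMI)}[\psi]=\tfrac12 I(A;C|B)_\psi$ for pure $\psi$.

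Nonnegativity is immediate. Each term $I(A;C|BE)_\sigma$ in Definition~\ref{def:df1} is nonnegative by strong subadditivity, so the infimum is nonnegative as well.

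For the upper bound I would use a classical-flag extension. Fix an arbitrary pure decomposition $\{p_x,\psi^x_{ABCD}\}$ of $\rho_{ABCD}$ and define
\begin{equation*}
\sigma_{ABCDE}:=\sum_x p_x\,\psi^x_{ABCD}\otimes|x\rangle\!\langle x|_E ,
\end{equation*}
with $\{|x\rangle\}$ orthonormal in a register $E$. Then $\tr_E\sigma=\rho_{ABCD}$, so $\sigma$ is a feasible extension in Eq.~\eqref{eq:sqtee}. Because $E$ is classical, the conditional mutual information decomposes as
\begin{equation*}
I(A;C|BE)_\sigma=\sum_x p_x\, I(A;C|B)_{\psi^x}.
\end{equation*}
This follows from writing $I(A;C|BE)=S(ABE)+S(BCE)-S(BE)-S(ABCE)$ and using $S(XE)_\sigma=H(p)+\sum_x p_x S(X)_{\psi^x}$ for each subsystem $X$. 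The four $H(p)$ terms cancel, leaving the weighted sum. It follows that $T_{\sq}(A;C|B)_\rho\le \tfrac12\sum_x p_x I(A;C|B)_{\psi^x}$. Taking the infimum over all pure decompositions then gives the bound in Eq.~\eqref{eq:coqcmi2}.

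The same argument can be phrased through the convexity property of Proposition~\ref{thm:prop}. For each pure $\psi^x$, every extension is a product $\psi^x\otimes\omega_E$, so $T_{\sq}(A;C|B)_{\psi^x}=\tfrac12 I(A;C|B)_{\psi^x}$. Convexity then yields the same inequality, and I would mention this as a consistency check.

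The only step that needs care is the entropy bookkeeping for the classical register, which is routine. Beyond that, the remaining point is to make sure the normalization of co(QCMI) in Eq.~\eqref{eq:coqcmi} agrees with the factor $\tfrac12$ in $T_{\sq}$; I do not expect a genuine obstacle.
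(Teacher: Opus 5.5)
Your proof is correct and follows the paper's argument: nonnegativity comes from strong subadditivity, and the upper bound comes from the classical-flag extension $\sigma_{ABCDE}=\sum_x p_x\psi^x_{ABCD}\otimes|x\rangle\langle x|_E$, with the four $H(p)$ terms cancelling in $I(A;C|BE)_\sigma$ (Eq.~\eqref{eq:coqcmi} does include the factor $\tfrac12$, so the normalization matches). One small improvement: the paper fixes an optimal ensemble and so implicitly assumes the infimum is attained, whereas you start from an arbitrary pure decomposition and take the infimum at the end, which avoids that assumption.
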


Importantly, from here we can form a resource-theoretic approach~\cite{CG19,GPG+25} to study topological order in many-body mixed states. $T_{\sq}=0$ for states with no topological order, and those can be deemed ``free'', whereas states with non-zero topological order have $T_{\sq}>0$ and can be deemed ``resourceful''. The set of free operations includes local operations and classical communication~\cite{CLM+14} between non-conditioning regions, local operations and one-way quantum communication from non-conditioning to conditioning regions, and local operations on conditioning regions (proof follows from Lemma 3 of \cite{GPG+25}). Since $T_{\sq}$ is non-increasing under these operations, it serves as a valid monotone for quantifying topological order in many-body mixed states. In contrast, the tripartite measure $N_{\sq}$ cannot directly characterize TEE, as it lacks the spatial conditioning structure required to isolate correlations across separated regions.

\begin{theorem}
\label{thm:thm2}
     If $T_{\sq}(A;B|C)_{\rho}$ of a quantum state $\rho_{ABCD}$ changes negligibly under the action of a quantum channel $\mc{E}_{A\to A'}$ on $A$ then there exists a state extension $\rho_{ABCDE}$ and a universal recovery map $\mc{R}_{\rho_{ABE}\otimes\rho_C,\mc{E}}$ that almost recovers the state $\rho_{ABCE}$ from $\mc{E}(\rho_{ABCE})$. In particular,
     \begin{equation}
         T_{\sq}(A;C|B)_{\rho}-T_{\sq}(A';C|B)_{\mc{E}(\rho)}\leq \varepsilon 
     \end{equation}
     implies the existence of a quantum channel $\mc{R}_{\rho_{ABE}\otimes\rho_C,\mc{E}}$ such that 
     \begin{equation}
        \sup_{\substack{\rho_{ABCE}:\\ \tr_E(\rho_{ABCDE})=\rho_{ABCD}}} F(\rho_{ABCE},\mc{R}\circ\mc{E}(\rho_{ABCE}))\geq 2^{-\varepsilon}, 
     \end{equation}
     and $\mc{R}_{\rho_{ABE}\otimes\rho_C,\mc{E}}\circ\mc{E}(\rho_{ABCE})\approx\rho_{ABCE}$ iff
     $\varepsilon\approx 0$.
\end{theorem}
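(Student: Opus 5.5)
The natural tool is the strengthened data-processing inequality for conditional mutual information: for a channel acting on one argument, the decrease in CMI is lower bounded by $-\log$ of the fidelity with a rotated Petz recovery. This is the result of Junge--Renner--Sutter--Wilde--Winter~\cite{JRS+18}, the same one behind the faithfulness part of Proposition~\ref{thm:prop}. The plan is to fix a near-optimal extension for the channel output, lift it to an extension of the input, and apply the recovery bound to that extension.

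\emph{Step 1 (choice of extension).} Fix $\delta>0$ and choose an extension $\sigma_{A'BCDE}$ of $\mc{E}(\rho)_{A'BCD}$ that comes within $\delta$ of the infimum defining $T_{\sq}(A';C|B)_{\mc{E}(\rho)}$. Since $\mc{E}$ acts only on $A$, I will not pick an arbitrary extension of the output. Instead I will restrict to extensions of the form $\mc{E}(\rho_{ABCDE})$, where $\rho_{ABCDE}$ is an extension of $\rho_{ABCD}$. Such extensions are admissible for the output state. Hence, for any extension $\rho_{ABCDE}$,
\begin{equation}
T_{\sq}(A;C|B)_\rho-T_{\sq}(A';C|B)_{\mc{E}(\rho)} \;\ge\; \tfrac12 I(A;C|BE)_{\rho^\star}-\tfrac12 I(A;C|BE)_{\rho}-\tfrac12\bigl[I(A';C|BE)_{\mc{E}(\rho)}\bigr]+\ldots
\end{equation}
This inequality is the delicate part. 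What I actually need is a single extension $\rho_{ABCDE}$ such that
\begin{equation}
I(A;C|BE)_\rho - I(A';C|BE)_{\mc{E}(\rho)} \le 2\varepsilon+\delta
\end{equation}
(or $\le\varepsilon$, depending on how the factor $\tfrac12$ is tracked). I will obtain it as follows. Take $\rho_{ABCDE}$ to be the extension whose channel image is near-optimal for the output problem, assuming the output infimum is approached by extensions of the form $\mc{E}(\rho_{ABCDE})$. Then $I(A;C|BE)_\rho\ge 2T_{\sq}(A;C|B)_\rho$ and $I(A';C|BE)_{\mc{E}(\rho)}\le 2T_{\sq}(A';C|B)_{\mc{E}(\rho)}+\delta$. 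The difference is then controlled only if the input CMI for this same extension is also near-optimal. Since $I(A;C|BE)_\rho\ge I(A';C|BE)_{\mc{E}(\rho)}$ by data processing, I will instead argue in the reverse order. First take $\rho_{ABCDE}$ near-optimal for the \emph{input}, then use that its image is admissible for the output. This gives only an upper bound on $T_{\sq}(A';\cdot)$, which is the wrong direction. \textbf{This mismatch is the main obstacle.} Handling it presumably requires interpreting the hypothesis as holding for a common extension (the supremum in the conclusion suggests the authors pick the best extension). I would state explicitly that $\varepsilon$ bounds the CMI gap for some extension achieving both infima up to $\delta$. Alternatively, I would restrict attention to channels for which every output extension lifts to an input extension; for local channels on $A$ this holds by Stinespring, since the purifying system of $\mc{E}(\rho)$ can be absorbed into $E$ through the environment of $\mc{E}$.

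\emph{Step 2 (recovery).} Given such an extension, write the CMI drop as a difference of relative entropies:
\begin{equation}
I(A;C|BE)_\rho-I(A';C|BE)_{\mc{E}(\rho)} = D(\rho_{ABCE}\Vert \rho_{ABE}\otimes\rho_C)-D(\mc{E}(\rho_{ABCE})\Vert\mc{E}(\rho_{ABE})\otimes\rho_C),
\end{equation}
using $I(A;C|BE)=I(ABE;C)-I(BE;C)$; the $I(BE;C)$ term is unchanged by $\mc{E}$. Applying the Junge et al.\ refinement of monotonicity with $\sigma=\rho_{ABE}\otimes\rho_C$ gives a universal (rotated Petz) recovery channel $\mc{R}_{\rho_{ABE}\otimes\rho_C,\mc{E}}$, depending only on $\sigma$ and $\mc{E}$, satisfying
\begin{equation}
-\log F(\rho_{ABCE},\mc{R}\circ\mc{E}(\rho_{ABCE}))\le \varepsilon+\delta .
\end{equation}
Letting $\delta\to0$ and taking the supremum over extensions yields the fidelity bound $\ge 2^{-\varepsilon}$.

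\emph{Step 3 (iff).} If $\varepsilon\approx0$, the fidelity is near $1$, so $\mc{R}\circ\mc{E}(\rho_{ABCE})\approx\rho_{ABCE}$ by Fuchs--van de Graaf. For the converse, near-perfect recovery together with data processing applied to $\mc{R}$ gives $I(A;C|BE)_{\mc{R}\circ\mc{E}(\rho)}\le I(A';C|BE)_{\mc{E}(\rho)}$. Asymptotic continuity of CMI, in the form of item 3 of Proposition~\ref{thm:prop}, then bounds the gap by $f(d,\cdot)\to0$.
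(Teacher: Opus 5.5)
\textbf{Verdict.} Your Step~2 is the paper's argument: write the CMI drop as $D(\rho_{ABCE}\Vert\rho_{ABE}\otimes\rho_C)-D(\mc{E}(\rho_{ABCE})\Vert\mc{E}(\rho_{ABE})\otimes\rho_C)$ and apply the Junge et al.\ bound. Step~1, however, leaves a genuine gap, and you created it yourself.

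\textbf{The gap in Step~1.} You reject the argument that works because of a sign slip. You then fall back on one of two things.
\begin{enumerate}
\item A stronger hypothesis, which changes the theorem.
\item The claim that every extension of $\mc{E}(\rho)_{A'BCD}$ has the form $\mc{E}(\rho_{ABCDE})$.
\end{enumerate}
The lifting claim is false. Take the replacement channel $\mc{E}(X)=\tr(X)\,\pi_{A'}$, with $\pi_{A'}$ maximally mixed, and let $\phi_{A'E}$ be a purification of $\pi_{A'}$. Then $\phi_{A'E}\otimes\rho_{BCD}$ extends $\mc{E}(\rho)=\pi_{A'}\otimes\rho_{BCD}$, yet every lifted extension equals $\pi_{A'}\otimes\rho_{BCDE}$. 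Stinespring does not rescue this: extensions built from the channel's environment are not of the form $(\mc{E}\otimes\mathrm{id}_E)(\rho_{ABCDE})$. The displayed inequality ending in ``$+\ldots$'' should also be removed.

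\textbf{The fix.} No lifting is needed, because the output infimum runs over a \emph{larger} set, and that only helps.
\begin{itemize}
\item Take $\rho_{ABCDE}$ $\delta$-optimal for the input, so $I(A;C|BE)_\rho\le 2T_{\sq}(A;C|B)_\rho+\delta$.
\item Its image under $\mc{E}$ is an admissible output extension, so $I(A';C|BE)_{\mc{E}(\rho)}\ge 2T_{\sq}(A';C|B)_{\mc{E}(\rho)}$. This upper bound on the output $T_{\sq}$ is exactly the direction you need, since that term enters with a minus sign.
\item Subtracting gives $I(A;C|BE)_\rho-I(A';C|BE)_{\mc{E}(\rho)}\le 2\varepsilon+\delta$ for a single extension, and Step~2 then finishes.
\end{itemize}
The paper packages this as $\inf_x f(x)-\inf_y g(y)\ge\inf_x[f(x)-g(\mc{E}(x))]$. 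Here $x$ ranges over input extensions, $y$ over output extensions, and $f,g$ are the corresponding CMIs.

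\textbf{The constant.} Tracked honestly, this gives $-\log F\le 2\varepsilon+\delta$, i.e.\ $\sup F\ge 2^{-2\varepsilon}$. Your passage to $\varepsilon+\delta$ is unjustified. The paper reaches $2^{-\varepsilon}$ only by dropping the $\tfrac12$ when it passes to the infimum, so the constant is off in both.

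\textbf{Step~3.} The paper's ``iff'' is only the Fuchs--van de Graaf equivalence between $F\approx 1$ and $\mc{R}\circ\mc{E}(\rho_{ABCE})\approx\rho_{ABCE}$. It does not prove that near-recovery forces $\varepsilon\approx 0$. Your attempted converse fails for three reasons.
\begin{itemize}
\item The rotated Petz map for $\sigma=\rho_{ABE}\otimes\rho_C$ acts on $A'BE$ jointly. Data processing therefore controls $I(ABE;C)$, not $I(A;C|BE)$.
\item Item~3 of Proposition~1 is continuity of $T_{\sq}$, not of CMI.
\item A small CMI gap for one extension cannot upper-bound $T_{\sq}(A;C|B)_\rho-T_{\sq}(A';C|B)_{\mc{E}(\rho)}$. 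That would need a lower bound on the output $T_{\sq}$ over all its extensions.
\end{itemize}
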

Theorem~\ref{thm:thm2} (detailed proof in Appendix \hyperref[app:C]{C}) shows that the correlations quantified by $T_{\sq}$ are stable under local processing: if a local operation changes $T_{\sq}$ only slightly, then there exists a state extension and a recover map that allow to approximately undo the action of the local processing.

\textit{Model.---} To illustrate the operational utility of $T_{\sq}$, we study the entanglement generation in an $N=8$ qubit 1D transverse-field Ising model with periodic boundary conditions, described by the Hamiltonian
\begin{equation}
\label{eq:TFIM}
H=-J\sum_{\langle ij\rangle}Z_{i}Z_{j}+h(t)\sum_{i}X_{i},
\end{equation}
where $J=1$ is the nearest-neighbor (NN) interaction, $Z_{i}$ and $X_{i}$ are the Pauli operators on $i$, and $h(t)$ is the time-dependent transverse magnetic field on all $i$. The system is initialized in $\ket{\psi_{0}}=\bigotimes_{i}\ket{\downarrow}_{i}$, corresponding to one of the two degenerate ferromagnetic ground states at $h=0$. We partition the chain into three overlapping regions $A,B,C$ via the LW scheme, thereby forming a ring. The remaining degrees of freedom constitute a fourth region $D$ that remains uncorrelated with $ABC$ throughout the evolution; hence, $\rho_{ABCD}(t)=\rho_{ABC}(t)\otimes\rho_D$ $\forall t$, in which case $N_{\sq} = T_{\sq}$. Note that for $ABC$ in a pure state $\psi_{ABC}$, $T_{\sq}(A;C|B)_{\rho}=\frac{1}{2}I(A;C)_{\psi}$. We then consider a quench under Eq.~\eqref{eq:TFIM}, where the system, coupled to a dephasing environment, is driven across the phase transition. $h(t)$ follows a standard linear ramp protocol: it increases from $0$ to $h_{\text{max}}=2J$ over time $t_\text{up}=150$, then it is held constant for $t_\text{hold}=20$, and ramped back to zero. The evolution is governed by the Lindblad master equation for the density matrix $\rho(t)$: $\dot{\rho} = -i [H(t), \rho] + \gamma \sum_{i} \left( Z_i \rho Z_i - \frac{1}{2} \{Z_{i}^{2}, \rho\} \right)$, where $\gamma$ is the local dephasing rate. The equation is solved via the Monte Carlo wavefunction method.

\textit{Numerical results.---} In Fig.~\ref{fig:fig1} we compare $T_{\sq}$ against the TMI $I_3(A:B:C) = S(A) + S(B) + S(C) - S(AB) - S(BC) - S(AC) + S(ABC)$; $I_{3}=0$ when the three subsystems are separable across any bipartition and $I_{3}>0$ in the presence of correlations (classical and quantum) shared among all three parties. At the onset of the quench $t\ll t_\text{up}$, both $T_\text{sq}$ and $I_{3}$ are near-zero, suggesting correlations are strictly local and confined to the NN $ZZ$ couplings in the ground state $\ket{\psi_{0}}$, Fig.~\ref{fig:fig1}(a). As $h(t)$ ramps up, quantum fluctuations from the non-commuting $X_i$ terms induce a rapid buildup of multipartite correlations, converting local $ZZ$ order into delocalized quasiparticle pairs.
Overall, we observe $T_\text{sq}$ remains systematically lower than $I_{3}$ due to the squashing of classical redundancies from the environment. This is consistent with the visible gap in Fig.~\ref{fig:fig1}(c) that only widens as the state becomes less pure under $\gamma$. Meanwhile, the macroscopic response to the quench is captured by the order parameters in Fig.~\ref{fig:fig1}(b): we observe the longitudinal magnetization $\langle Z \rangle$ decays sharply during the ramp-up as quantum fluctuations suppress long-range $ZZ$ order, while the transverse magnetization $\langle X \rangle$ peaks near criticality, signaling the transition to the paramagnetic regime. This is corroborated by the average NN correlations in Fig.~\ref{fig:fig1}(d), whose rapid decrease coincides with the decay of $\langle Z \rangle$ and the growth of both $T_{\text{sq}}$ and $I_3$. Physically, as $h(t)$ disrupts the initial spin alignment, the suppression of local ferromagnetic order gives rise to delocalized quantum fluctuations that drive tripartite entanglement. The close agreement between $T_{\text{sq}}$ and $I_3$ early on indicates the presence of predominantly quantum correlations, with classical contributions becoming significant only after prolonged dephasing.

\begin{figure}
    \centering
    \includegraphics[width=1.0\linewidth]{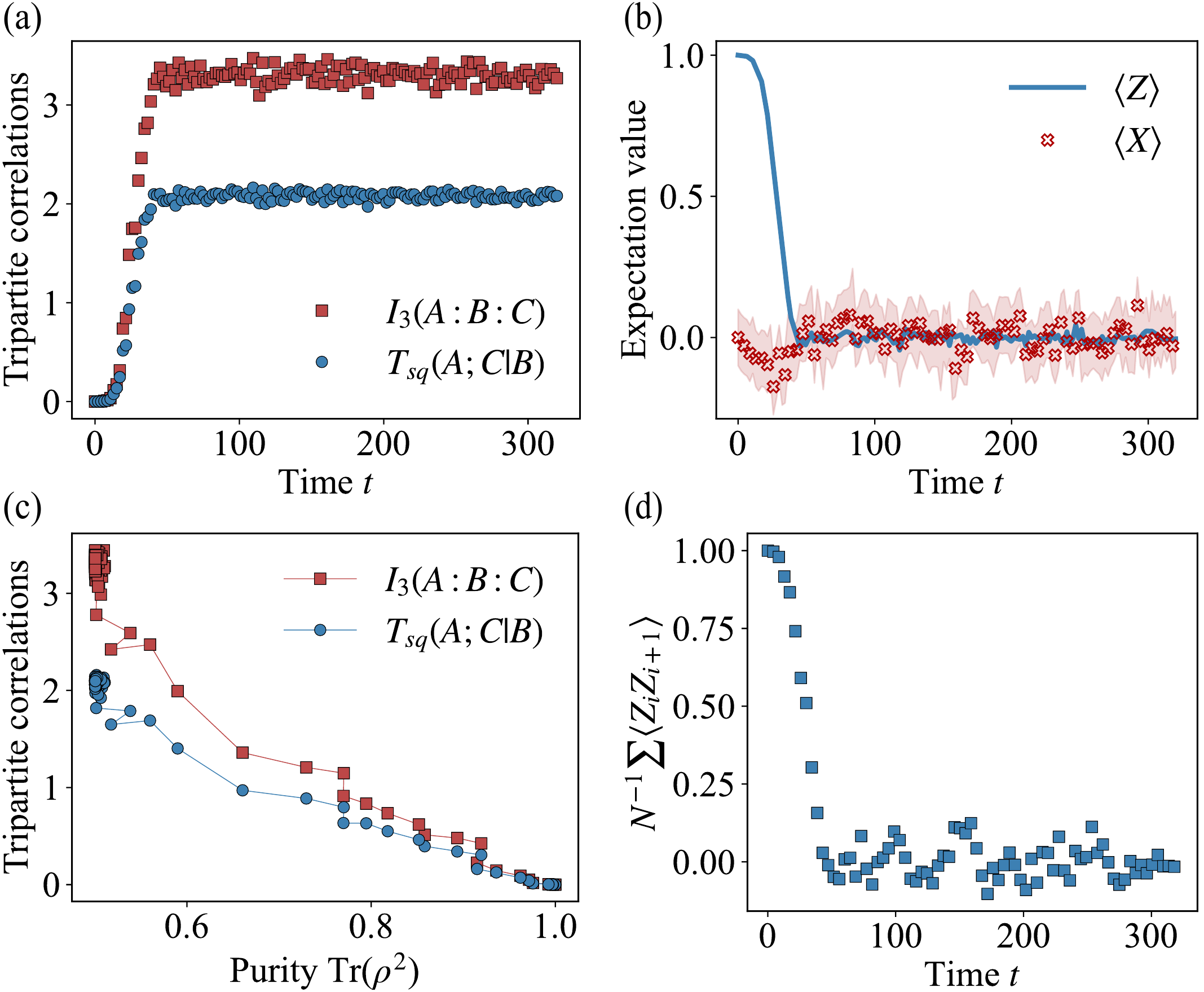}
    \caption{(a) Generation of tripartite correlations for the initial state $\ket{\psi_{0}}$ quenched under Eq.~\eqref{eq:TFIM} with $\gamma=0.5$. (b) As $h(t)$ ramps up, the system crosses the critical point, evident from the rapid suppression of the longitudinal magnetization order $\langle Z\rangle = N^{-1}\sum_i\langle Z\rangle$, hence marking the transition to a paramagnetic regime. (c) As purity decreases under $\gamma$, the gap widens, showing $T_{\text{sq}}$ effectively squashes classical correlations. (d) The average NN correlation function over time.}
    \label{fig:fig1}
\end{figure}

We now compare $T_{\text{sq}}$ with $\tau_{3}$, Fig.~\ref{fig:fig2}. Calculating its exact Coffman-Kundu-Wootters form \cite{PhysRevA.61.052306} for mixed states, however, requires convex roof optimization over pure-state decompositions which is intractable for general dissipative dynamics \cite{lohmayer2006entangled}. For this reason, we employ the negativity-based witness $\tau_{3}=\sqrt[3]{N_{A|BC}\cdot N_{B|AC}\cdot N_{C|AB}}$, where $N_{i|jk} = \frac{\|\rho_{ABC}^{T_i}\|_{1}-1}{2}$ is the negativity across the $i|jk$ bipartition and $\rho_{ABC}^{T_i}$ is the partial transpose with respect to subsystem $i$ \cite{jungnitsch2011taming, adesso2006multipartit}. $\tau_{3}$ provides a strict lower bound for GME, vanishes if the state is biseparable across any bipartition, and is efficiently computable for mixed states via direct diagonalization of the reduced density matrix. We see from Fig.~\ref{fig:fig2}(a) that adjacent qubits support both bipartite and GME. At early times $t \ll t_{\text{up}}$, the sharp increase in $\tau_{3}$ indicates the presence of irreducible GHZ-type correlations induced by NN $ZZ$ couplings. Its subsequent collapse to $0$ is expected and underscores the known fragility of GME under local $\gamma$ \cite{simon2002robustness}. In contrast, $T_{\sq}$ remains positive and oscillates, capturing persistent conditional entanglement. While for distant qubits, we see $\tau_{3}=0$, confirming that long-range correlations, generated by low-energy quasiparticle excitations during the quench, are bipartite, Fig.~\ref{fig:fig2}(b). $T_{\sq}$ is again positive here. Its oscillations indicate coherent quasiparticle recurrences within the lattice. Overall, while $\tau_{3}$ is only sensitive to residual GME, $T_{\sq}$ captures genuine conditional quantum correlations shared across a given bipartition, capturing all quantum resources on local and non-local scales.

\begin{figure}
    \centering
    \includegraphics[width=1.0\linewidth]{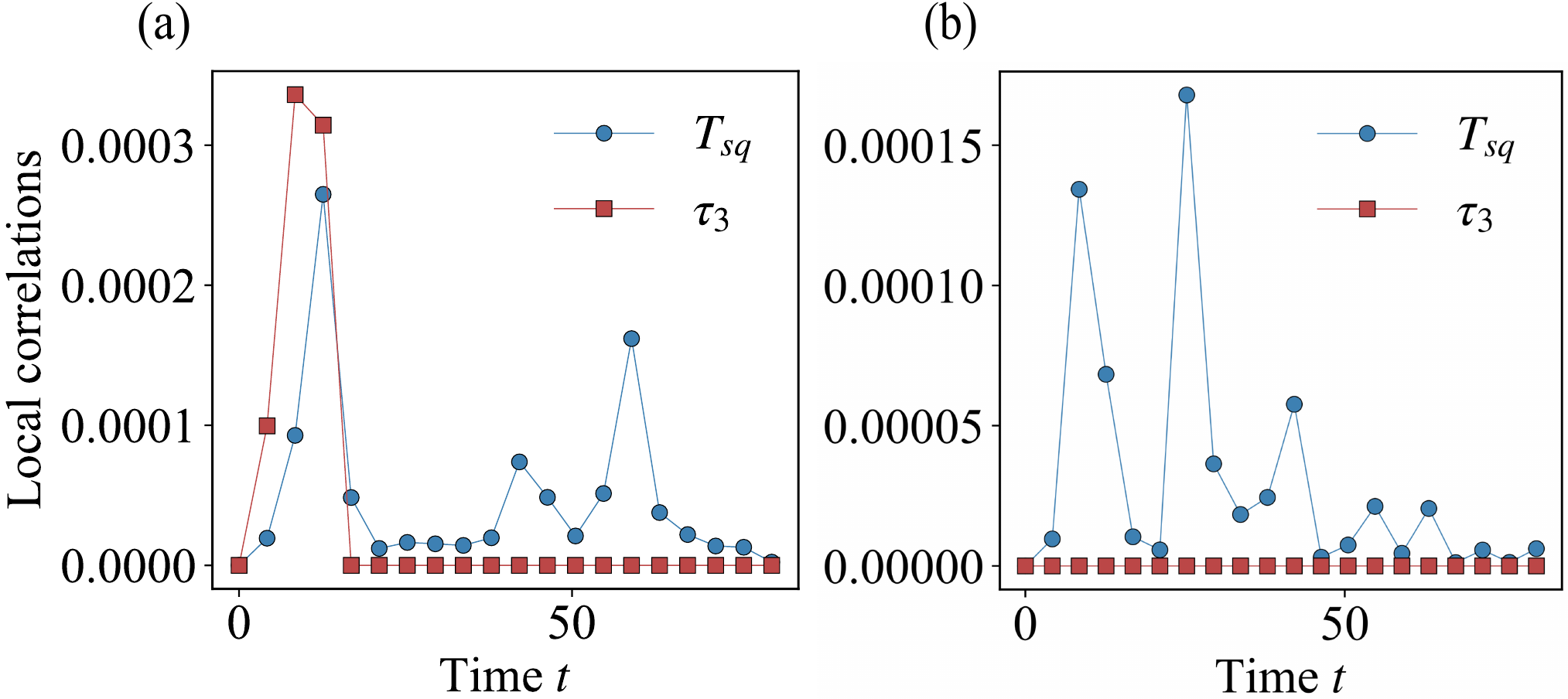}
    \caption{(a) For adjacent qubits, the quench induces local GHZ-type GME that quickly collapses due to $\gamma$ and monogamy as correlations spread. In contrast, $T_{\sq}$ remains positive, witnessing bipartite and GME. (b) For distant qubits, $\tau_{3} = 0$ as the correlations, mediated by quasiparticles, are bipartite. Under the quench, $T_{\sq}$ oscillates due to coherent recurrences in the non-local regime. In both scenarios, $T_{\sq}$ captures genuinely quantum correlations that $\tau_{3}$ misses.}
    \label{fig:fig2}
\end{figure}

\textit{Discussion.---} We introduced the hysteretic squashed entanglement \(T_{\sq}\) as a measure of irreducible conditional quantum correlations in many-body systems. By minimizing the QCMI over compatible state extensions, \(T_{\sq}\) removes correlations mediated through the conditioning region and hidden environments, isolating those that cannot be reproduced by extension-based recovery, Fig.~\ref{fig:fig1}. Its properties (Proposition~\ref{thm:prop}) establish it as a consistent conditional entanglement monotone; its ordering with $E_{\sq}$ and $N_{\sq}$ (Lemma~\ref{lem:lower_bounds}) places it within a hierarchy of squashed correlations, and it is proven to be upper bounded by co(QCMI) in Theorem~\ref{thm:thm1}. $T_{\sq}=0$ characterizes conditional recoverability: all conditioned $A$-$C$ correlations are compatible with reconstruction from local data and a suitable state extension, whereas $T_{\sq}>0$  quantifies the irreducible obstruction to such reconstruction. 

We further observed this in Fig.~\ref{fig:fig2}, where compared with $\tau_3$, $T_{\sq}$ detects conditional correlations in mixed, non-equilibrium regimes where residual GME measures fail. This viewpoint provides a quantitative route to mixed-state topological structure (Theorem~\ref{thm:thm2}). Here, $T_{\sq}>0$ signals conditionally protected long-range quantum correlations that cannot be locally mediated, while $T_{\sq}=0$ indicates that apparent long-range correlations are reconstructible and do not encode intrinsic global order. These results open immediate avenues for studying resource theory of conditional entanglement based on $T_{\sq}$ and using $T_{\sq}$ to classify mixed-state phases; particularly to characterize the emergence, stability, and dynamical transitions of topological order beyond equilibrium and pure states.

\textit{Acknowledgments.---} S.D. and A.Y. contributed equally and are listed alphabetically. S.D. acknowledges support from the ANRF, DST, Govt.~of India, under Grant No.~SRG/2023/000217, MeitY, Govt.~of India, under Grant No.~4(3)/2024-ITEA, and the IIIT Faculty Seed Grant.

\section*{Appendix}

\phantomsection
\textit{Appendix A: Quantum conditional mutual information.---} Here we provide some further details about QCMI. $I(A;C|B)_{\rho}=0$ iff there exists a universal recovery map $\mc{R}_{B\to BC}$, such that $\mc{R}_{B\to BC}(\rho_{AB})=\rho_{ABC}$~\cite{Pet88}. In fact, for $I(A;C|B)_{\rho}=0$, the recovery map $\mc{R}_{B\to BC}$ coincides with the Petz recovery map $\mc{R}^{\rm P}_{B\to BC}$~\cite{Pet88,FR15,JRS+18}
\begin{equation}
    \mathcal{R}^{\mathrm P}_{B\to BC}(\cdot)=
\rho_{BC}^{1/2}
\left(
\rho_B^{-1/2} \, (\cdot) \, \rho_B^{-1/2}
\otimes \mathbbm{1}_C
\right)
\rho_{BC}^{1/2}.
\end{equation}
For a state $\rho_{AB}$, $\rho_A=\tr_B(\rho_{AB})$ is a reduced or marginal state, and any state $\omega_{ABC}$ such that $\tr_C(\omega_{ABC})=\rho_{AB}$ is called its state extension. For finite-dimensional systems, a state $\rho_{AC}$ is separable (not entangled) if there exists a state extension $\rho_{ACE}$ such that $I(A;C|E)_{\rho}=0$~\cite{LW18}. In other words, a quantum state $\rho_{AC}$ is not entangled if there exists a state extension $\rho_{ACE}$ and a universal recovery map $\mc{R}_{E\to EC}$ such that $\mc{R}_{E\to EC}(\rho_{AE})=\rho_{ACE}$. As highlighted in the main text, this notion was recently extended to quantify the squashed quantum non-Markovianity $N_{\sq}$~\cite{GPG+25} that captures entanglement between $A$ and $C$, conditioned on $B$, for a tripartite state $\rho_{ABC}$ if there exists a state extension $\rho_{ABCE}$, such that $I(A;C|BE)_{\rho}=0$. This makes sense because genuine quantum correlations between two systems are not shareable with infinitely many systems~\cite{KW04,KDWW19}. If $I(A;C|B)_{\rho}=0$ then one can sequentially apply $\mc{R}_{B\to BC}(\rho_{AB})$ many times to obtain $\mathcal{R}\circ\cdots\circ\mathcal{R}(\rho_{AB})
= \rho_{ABC_1\ldots C_k}$, where $C_i\simeq C$ for $i\in\{1,2,\ldots, k\}$ such that $\tr_{C_1\ldots C_k\setminus C_i}=\rho_{AC}$ $\forall k\in\mathbb{N}$~\cite{LW18}, which is possible iff $\rho_{AC}$ is separable~\cite{DPS04}. 

\textit{Appendix B: co(QCMI) and $T_{\sq}$.---} For a mixed state $\rho_{ABCD}$, the co(QCMI)~\cite{WSMG25} is formally defined as
\begin{equation}\label{eq:coqcmi}
    {\rm co(QCMI)[\rho_{ABCD}]}:= \frac{1}{2}\inf_{{\substack{\{p_i,\psi^i\}_i: \\ \rho=\sum_ip_i\psi^i}}}\sum_{i}p_iI(A:C|B)_{\psi^i},
\end{equation}
where the infimum is taken over all ensembles $\{p_i,\psi^i_{ABCD}\}_i$ of pure states. For a pure state $\psi_{ABCD}$, the upper bound in Theorem~\ref{thm:thm1} is saturated 
\begin{equation}
T_{\sq}(A;C|B)_{\psi}={\rm co(QCMI)[\psi_{ABCD}]}=\frac{1}{2}I(A;C|B)_{\psi}=\gamma,
\end{equation}
which can be expected since every extension of a pure state is trivial, so the optimization in $T_{\sq}$ is attained by the purification itself.

\phantomsection
\textit{Appendix C: Proofs of main results.---}\label{app:C} We now prove Theorem~\ref{thm:thm1} and Theorem~\ref{thm:thm2}.\\

\textbf{Theorem 1.} \textit{For an arbitrary relevant state $\rho_{ABCD}$, the hysteretic TEE $T_{\sq}(A;C|B)_{\rho}$ is upper bounded by the co(QCMI)
\begin{equation}
   0\leq T_{\sq}(A;C|B)_{\rho}\leq {\rm co(QCMI)}[\rho_{ABCD}]. \nonumber
\end{equation}}

\begin{proof}
The hysteretic TEE $T_{\sq}$ is nonnegative because of the nonnegativity of the conditional mutual information of a quantum state. For a relevant state $\rho_{ABCD}$, let $\{p_i,\psi^i_{ABCD}\}_i$ denote an ensemble of pure states that gives optimal value for ${\rm co(QCMI)}[\rho_{ABCD}]$. Using this optimal ensemble, we can form a state extension $\sigma_{ABCDE}$ of $\rho_{ABCD}$ such that
\begin{equation}
\sigma_{ABCDE}=\sum_{i}p_i\psi^i_{ABCD}\otimes\op{i}_E,
\end{equation}
where $\{\ket{i}\}_i$ is an orthonormal set of vectors. It follows that
\begin{align}
T_{\sq}(A;C|B)_{\rho} &\le \frac{1}{2} I(A;C|BE)_{\sigma} \nonumber\\
&= \frac{1}{2}\sum_{i} p_i I(A;C|B)_{\psi^i} \nonumber\\
&= \mathrm{co(QCMI)}[\rho_{ABCD}],
\end{align}
where, to arrive at the first equality, we used the fact that
\begin{equation}
S(XA)_{\rho}=H(X)+\sum_xp_X(x)S(\rho^x_A) 
\end{equation}
for a classical-quantum state 

\begin{equation}
\rho_{XA}=\sum_xp_X(x)\op{x}_X\otimes\rho^x_A.
\end{equation}
This proves the upper bound on $T_{\sq}(A;C|B)_{\rho}$, which is saturated if $\rho_{ABCD}$ is pure because the extending system $E$ can only be in product state with $ABCD$, and $I(A;C|B)_{\psi}=I(A;C|BE)_{\psi}$ for all possible state extensions $\psi_{ABCDE}$ of an arbitrary pure state $\psi_{ABCD}$.\\
\end{proof}
     
\textbf{Theorem 2.} \textit{If $T_{\sq}(A;B|C)_{\rho}$ of a quantum state $\rho_{ABCD}$ changes negligibly under the action of a quantum channel $\mc{E}_{A\to A'}$ on $A$ then there exists a state extension $\rho_{ABCDE}$ and a universal recovery map $\mc{R}_{\rho_{ABE}\otimes\rho_C,\mc{E}}$ that almost recovers the state $\rho_{ABCE}$ from $\mc{E}(\rho_{ABCE})$. In particular,
     \begin{equation}
         T_{\sq}(A;C|B)_{\rho}-T_{\sq}(A';C|B)_{\mc{E}(\rho)}\leq \varepsilon  \nonumber
     \end{equation}
     implies the existence of a quantum channel $\mc{R}_{\rho_{ABE}\otimes\rho_C,\mc{E}}$ such that 
     \begin{equation}
        \sup_{\substack{\rho_{ABCE}:\\ \tr_E(\rho_{ABCDE})=\rho_{ABCD}}} F(\rho_{ABCE},\mc{R}\circ\mc{E}(\rho_{ABCE}))\geq 2^{-\varepsilon}, \nonumber
     \end{equation}
     and $\mc{R}_{\rho_{ABE}\otimes\rho_C,\mc{E}}\circ\mc{E}(\rho_{ABCE})\approx\rho_{ABCE}$ iff
     $\varepsilon\approx 0$.}
\begin{proof}

    For any state extension $\rho_{ABCDE}$ of $\rho_{ABCD}$ and any quantum channel $\mc{E}_{A\to A'}$, there exists a universal recovery map $\mc{R}$ such that
\begin{align}
&I(A;C|BE)_{\rho} - I(A;C|BE)_{\mc{E}(\rho)}\nonumber\\
&=I(ABE;C)_{\rho}-I(BE;C)_{\rho}
   - I(ABE;C)_{\mc{E}(\rho)}\nonumber\\
   &\qquad\qquad\qquad\qquad \qquad +I(BE;C)_{\mc{E}(\rho)} \nonumber\\
&= I(ABE;C)_{\rho} - I(ABE;C)_{\mc{E}(\rho)} \nonumber\\
&\ge - \log F\!\left(
\rho_{ABCE},
\mc{R}_{\rho_{ABE}\otimes\rho_{C},\mc{E}}
\circ \mc{E}(\rho_{ABCE})
\right),
\end{align}
where the last inequality follows from the strengthened data-processing inequality~\cite{JRS+18} after recognizing that the QMI of a quantum state is equal to the quantum relative entropy between the states and the tensor-product of its reduced state 
\begin{equation}
I(A;B)_{\sigma}=D(\sigma_{AB}\|\sigma_A\otimes\sigma_B).
\end{equation}
In particular, for arbitrary quantum states $\rho_A,\sigma_A$ and arbitrary quantum channel $\mathcal{N}_{A\to B}$, there exists a universal recovery map $\mathcal{R}_{\sigma,\mathcal{N}}$ such that~\cite{JRS+18}
\begin{equation}
    D(\rho\|\sigma)-D(\mathcal{N}(\rho)\|\mathcal{N}(\sigma))\geq -\log F(\rho,\mathcal{R}_{\sigma,\mathcal{N}}\circ\mathcal{N}(\rho)).
\end{equation}
Furthermore, $ D(\rho\|\sigma)=D(\mathcal{N}(\rho)\|\mathcal{N}(\sigma))$ iff~\cite{Pet88}
\begin{equation}
    \mathcal{R}_{\sigma,\mathcal{N}}(\cdot)=\sigma^{\frac{1}{2}}\mathcal{N}^\dag[\mathcal{N}(\sigma)^{-\frac{1}{2}}(\cdot)\mathcal{N}(\sigma)^{-\frac{1}{2}}]\sigma^{\frac{1}{2}}.
\end{equation}

Taking infimum over all state extensions $\rho_{ABCDE}$ of $\rho_{ABCD}$, we get
   \begin{align}
&T_{\sq}(A;C|B)_{\rho}
 - T_{\sq}(A';C|B)_{\mc{E}(\rho)}\nonumber\\
&\ge \inf_{\rho_{ABCDE}}
\!\left[
I(A;C|BE)_{\rho}
 - I(A;C|BE)_{\mc{E}(\rho)}
\right] \nonumber\\
&\ge - \log
\sup_{\rho_{ABCDE}}
F\!\left(
\rho_{ABCE},
\mc{R}_{\rho_{ABE}\otimes\rho_C,\mc{E}}
\circ \mc{E}(\rho_{ABCE})
\right).
\end{align}
We know that $0\leq F(\tau,\sigma)\leq 1$ for any two states $\tau_A$ and $\sigma_A$; $F(\tau,\sigma)=1$ iff $\tau=\sigma$, and $F(\tau,\sigma)=0$ iff $\tau\perp\sigma$. We have 

\begin{equation}
1 - \sqrt{F(\tau,\sigma)} \;\le\; \frac{1}{2}\|\rho - \sigma\|_1
\;\le\; \sqrt{\,1 - F(\rho,\sigma)\,}.
\end{equation}
This observation concludes the proof as $F(\tau,\sigma)\approx 1$ iff $\tau\approx \sigma$.
\end{proof}

Theorem~\ref{thm:thm2} formalizes that small change of $T_{\sq}$ under local processing implies approximate recoverability of the global correlations conditioned on the mediating region. This connects irreducible conditional entanglement to robustness of information flow under local noise.\\

\textit{Appendix D: Extension to dynamical settings.---} Our formalism of hysteretic entanglement can be extended to study dynamical memory effects in open many-body quantum systems. This perspective is motivated by the application of $N_{\sq}$ in the context of revivals of information in open quantum systems~\cite{BGG+25}.

Let $\rho_{ABCD}(t)$ denote the evolved state at time $t=t$ with $t=0$ denoting the initial time. Consider that the evolution is happening under quantum dynamics with Liouvillian $\mc{L}^{(k)}$ for $k\in\{1,2,\ldots\}$ and we do not know under which dynamics the state evolves. Besides, the dynamics $\mc{L}^{(k)}$ are such that if $\rho_{ABCD}(0)$ has $I(A;C|B)_{\rho(0)}=0$, then $I(A;C|B)_{{\rm e}^{\mc{L}^{(k)}_t}(\rho)}=0$. That is, the dynamics $\mc{L}^{(k)}$ preserves the quantum Markov chain property $A$-$B$-$C$ of the initial state. In this case, there exists state $\rho_{ABCD}(0)$ and dynamics $\mc{L}^{(k)}$ for $k\in\{1,2,\ldots\}$ such that when we do not know which $\mc{L}^{(k)}$ occurred, we could have $I(A;C|B)_{\rho(t)}>0$ and still we will always have $T_{\sq}(A;C|B)_{\rho(t)}=0$. This holds because of the convexity of $T_{\sq}$, see Proposition~\ref{thm:prop}.

Let us now consider a tripartite state $\rho_{ABC}(t_0)=\Phi_{AC}\otimes\gamma_B$ at $t=t_0$, where $\Phi_{AC}:=\sum_{i,j}\ket{ii}\bra{jj}_{AC}$ is a maximally entangled state and $\gamma_B$ is some state. Suppose the state $\rho_{ABC}$ undergoes a two-step unitary transformation: $\mc{U}_{AB\to A_1B_1}$ from $t=t_0$ to $t=t_1$ and $\mc{V}_{A_1B_1\to A_2B_2}$ from $t=t_1$ to $t=t_2$, i.e., $\rho_{ABC}{(t_0)}\to \rho_{A_1B_1C}{(t_1)}\to \rho_{A_2B_2C}{(t_2)}$. Such an open quantum system dynamics is said to exhibit a non-causal revival of information~\cite{BGG+25}, whenever a revival occurs $I(A_2;C)_{\rho(t_2)}>I(A_1;C)_{\rho(t_1)}$, but there exists a compatible model for the dynamical evolution and an inert extension system $E$ such that $I(A_2E;C)_{\rho(t_2)}\leq I(A_1E;C)_{\rho(t_1)}$. If $E_{\sq}(A_2;C)_{\rho(t_2)}>S(A_1)_{\rho(t_1)}$, then the revival requires a non-zero amount of genuine backflow to be explained~\cite{BGG+25}.

\newpage

\widetext
\section*{Supplemental Materials}
\phantomsection
\label{sec:SM}

We ought to start with Ref. \cite{GPG+25} which formulated the notion of squashed quantum non-Markovianity $N_{\sq}$. The properties of $T_{\sq}$, as defined in Proposition~\ref{thm:prop} in the main text, were proven in Lemma 2 of \cite{GPG+25}. The squashed-based measures $E_{\sq}$, $N_{\sq}$, and $T_{\sq}$ form a hierarchy distinguished by the amount of conditioning allowed when removing mediated correlations. Where $E_{\sq}$ removes correlations reproducible via an external extension, $N_{\sq}$ additionally allows conditioning on a mediating subsystem, whereas $T_{\sq}$ allows mediation in the case of embedding within a larger environment. With that, we now prove Lemma~\ref{lem:lower_bounds}.\\

\textbf{Lemma 1.} \textit{Suppose $\rho_{AC}$ and $\rho_{ABC}$ are reduced states of $\rho_{ABCD}$. The squashed quantum correlations $E_{\sq}$ of $\rho_{AC}$, $N_{\sq}$ of $\rho_{ABC}$, and $T_{\sq}$ of $\rho_{ABCD}$ can then be ordered as}
\begin{align}
E_{\sq}(A;C)_{\rho}\leq N_{\sq}(A;C|B)_{\rho}
\leq T_{\sq}(A;C|B)_{\rho} \nonumber
\leq \frac{1}{2}I(A;C|D)_{\rho}. \nonumber
\end{align}

\begin{proof}
The inequality $E_{\sq}(A;C)_{\rho}\leq N_{\sq}(A;C|B)_{\rho}$ was proven in Lemma 1 of \cite{GPG+25}. Furthermore, $N_{\sq}(A;C|B)_{\rho}\leq T_{\sq}(A;C|B)_{\rho}$ follows directly from Definition 1 of \cite{GPG+25} and Definition~\ref{def:df1} in this work. We notice that $\rho_{ABC}$ is a marginal state of $\rho_{ABCD}$, then the set of all possible state extensions $\sigma^2_{ABCDE}$ of $\rho_{ABCD}$ is contained in the set of all possible state extensions $\sigma^1_{ABCE}$ of $\rho_{ABC}$, and
\begin{align}
N_{\sq}(A;C|B)_{\rho}
&= \frac{1}{2}\inf_{\sigma^1_{ABCE}} I(A;C|BE)_{\sigma^1} \nonumber\\
&\le \frac{1}{2}\inf_{\sigma^2_{ABCDE}} I(A;C|BE)_{\sigma^2} \nonumber\\
&= T_{\sq}(A;C|B)_{\rho}.
\end{align}
We get $T_{\sq}(A;C|B)_{\rho}\leq \frac{1}{2}I(A;C|D)_{\rho}$ by considering purification $\psi^{\rho}_{ABCDE}$ of a state $\rho_{ABCD}$, i.e., $\psi^{\rho}_{ABCDE}$ is a pure state such that $\tr_E(\psi^{\rho}_{ABCDE})=\rho_{ABCD}$
\begin{align}
T_{\sq}(A;C|B)_{\rho} \le \frac{1}{2}I(A;C|BE)_{\psi^{\rho}} \nonumber
&= \frac{1}{2}I(A;C|D)_{\psi^{\rho}} = \frac{1}{2}I(A;C|D)_{\rho},
\end{align}
where the first equality follows from the duality of the QCMI for pure states: for any pure state $\phi_{ABCD}$, we have
\begin{equation}
I(A;C|B)_{\phi}=I(A;C|D)_{\phi}.
\end{equation}

\end{proof}

The bound $T_{\sq}(A;C|B)_\rho \le \frac{1}{2}  I(A;C|D)_\rho$ shows that irreducible conditional entanglement cannot exceed the total conditional correlations across any complementary partition. Thus, $T_{\sq}$ isolates the portion of QCMI after squashing out all extension-assisted mediation.

Exact invariance of $T_{\sq}$ under a local channel characterizes a quantum Markov structure conditioned on the mediating subsystem and compatible extensions. Thus, vanishing change of $T_{\sq}$ identifies states whose conditional correlations are entirely reconstructible. The following observation directly follows from Theorem~\ref{thm:thm2} and the saturation condition of the data-processing inequality~\cite{Pet88,FR15,JRS+18,sohail2025fundamental}.
\begin{corollary}
     The hysteretic TEE $T_{\sq}(A;C|B)_{\rho}$ of a quantum state $\rho_{ABCD}$ remains invariant under the action of a quantum channel $\mc{E}_{A\to A'}$ on $A$ iff there exists a state extension $\rho_{ABCDE}$ and a universal recovery map $\mc{R}_{\rho_{ABE}\otimes\rho_C,\mc{E}}$ that perfectly recovers the state $\rho_{ABCE}$ from $\mc{E}(\rho_{ABCE})$. That is
     \begin{equation}
         T_{\sq}(A;C|B)_{\rho}= T_{\sq}(A';C|B)_{\mc{E}(\rho)}
     \end{equation}
    iff there exists of a state extension $\rho_{ABCDE}$ and a universal recovery map $\mc{R}_{\rho_{ABE}\otimes\rho_C,\mc{E}}$ such that 
     \begin{equation}
        \sup_{\substack{\rho_{ABCE}:\\ \tr_E(\rho_{ABCDE})=\rho_{ABCD}}} F(\rho_{ABCE},\mc{R}\circ\mc{E}(\rho_{ABCE}))= 1.
     \end{equation}
Furthermore, $\mc{R}_{\rho_{ABE}\otimes\rho_C}$ reduces to the Petz recovery map $\mc{R}^{\rm P}_{\rho_{ABE}\otimes\rho_C}$~\cite{Pet88}
\begin{equation}
\mc{R}^{\rm P}_{\sigma,\mc{N}}(\cdot):= \sigma^{1/2}\mc{N}^\dag[\mc{N}(\sigma)^{-1/2}(\cdot)\mc{N}(\sigma)^{-1/2}]\sigma^{1/2}. 
   \end{equation}
\end{corollary}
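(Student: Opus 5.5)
Both directions reduce to the approximate statement of Theorem~\ref{thm:thm2}, taken at $\varepsilon=0$, together with the equality case of the data-processing inequality~\cite{Pet88}. The setup rests on two observations. First, data processing gives $T_{\sq}(A';C|B)_{\mc{E}(\rho)}\le T_{\sq}(A;C|B)_{\rho}$: every extension $\rho_{ABCDE}$ of $\rho$ is mapped by $\mc{E}$ to an extension of $\mc{E}(\rho)$, and for it the chain rule used in the proof of Theorem~\ref{thm:thm2} gives
\begin{equation}
I(A;C|BE)_{\rho}-I(A';C|BE)_{\mc{E}(\rho)}=D(\rho_{ABCE}\|\rho_{ABE}\otimes\rho_C)-D(\mc{E}(\rho_{ABCE})\|\mc{E}(\rho_{ABE})\otimes\rho_C)\ge 0 .
\end{equation}
Second, invariance of $T_{\sq}$ therefore means the gap is exactly $0$.

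For the forward direction (invariance $\Rightarrow$ perfect recovery), I apply Theorem~\ref{thm:thm2} with $\varepsilon=0$. This gives $\sup_{\rho_{ABCE}}F(\rho_{ABCE},\mc{R}\circ\mc{E}(\rho_{ABCE}))\ge 2^{0}=1$, and since $F\le1$ the supremum equals $1$. The main obstacle is upgrading ``supremum equals 1'' to ``some extension attains it.'' The recovery map depends on the extension, and the infimum in Definition~\ref{def:df1} need not be attained a priori. I would restrict $E$ to a fixed finite dimension, as noted after Definition~\ref{def:df1}, so that the set of extensions is compact. Both $T_{\sq}$ values are then attained by minimizing extensions, and the relevant quantities are continuous. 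Taking a minimizer $\rho^\star$ for $T_{\sq}(A;C|B)_{\rho}$, its image $\mc{E}(\rho^\star)$ is a valid extension of $\mc{E}(\rho)$. Hence
\begin{equation}
0=T_{\sq}(A;C|B)_\rho-T_{\sq}(A';C|B)_{\mc{E}(\rho)}\ge \tfrac12\bigl[I(A;C|BE)_{\rho^\star}-I(A';C|BE)_{\mc{E}(\rho^\star)}\bigr]\cdot(-1)
\end{equation}
is the wrong direction. This shows the subtlety: I actually need the gap for a single extension to vanish, not just the gap between the infima. I would try to pick $\rho^\star$ to be a minimizer for the \emph{post-channel} problem lying in the image of $\mc{E}$ on extensions, or argue via a limiting sequence of extensions whose individual gaps tend to $0$ and pass to a convergent subsequence. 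Continuity of the relative entropies and of $F$ then gives an extension with zero gap. On that extension Petz's theorem applies: the relative-entropy equality holds iff the Petz map $\mc{R}^{\rm P}_{\rho_{ABE}\otimes\rho_C,\mc{E}}$ satisfies $\mc{R}^{\rm P}\circ\mc{E}(\rho_{ABCE})=\rho_{ABCE}$. This yields fidelity $1$ and the Petz form claimed in the corollary.

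For the converse, suppose some extension $\rho_{ABCDE}$ and a channel $\mc{R}$ achieve perfect recovery. Since $\mc{R}\circ\mc{E}$ then maps $\rho_{ABCE}\mapsto\rho_{ABCE}$ and $\rho_{ABE}\otimes\rho_C\mapsto\rho_{ABE}\otimes\rho_C$ (this holds for the Petz map by construction), monotonicity under $\mc{R}$ forces the relative-entropy gap to be $0$. I would then transport extensions back and forth: any extension of $\mc{E}(\rho)$ is pulled back through $\mc{R}$ to an extension of $\rho$. Data processing then gives $T_{\sq}(A;C|B)_\rho\le T_{\sq}(A';C|B)_{\mc{E}(\rho)}$, and combined with the first observation this yields equality. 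Justifying the pullback in general is the second delicate point, since $\mc{R}$ acts on $ABE$ rather than on arbitrary new extension systems. I expect the attainment/pullback issues, rather than the Petz saturation itself, to be the crux.
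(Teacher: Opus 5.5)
\textbf{The forward direction is fine.} Your forward direction is sound and is what the paper does: Theorem~\ref{thm:thm2} at $\varepsilon=0$ gives $\sup F=1$, and Petz's equality condition identifies the recovery map. The ``wrong direction'' you hit is only a sign slip. If $\rho^\star$ is optimal for $T_{\sq}(A;C|B)_\rho$, then $T_{\sq}(A';C|B)_{\mc{E}(\rho)}\le\frac12 I(A';C|BE)_{\mc{E}(\rho^\star)}$. Invariance then gives $0\ge\frac12\bigl[I(A;C|BE)_{\rho^\star}-I(A';C|BE)_{\mc{E}(\rho^\star)}\bigr]\ge 0$, so the pre-channel minimizer itself has zero gap. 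You need no post-channel minimizer. Without attainment, the same estimate along a minimizing sequence already gives the displayed $\sup F=1$.

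\textbf{The converse is false, so the pullback cannot be repaired.} The converse is not a step that merely needs more care: ``perfect recovery on some extension $\Rightarrow$ invariance'' is false. Take $B,D$ trivial, $A=A_1A_2$, $C$ a qubit, $\mc{E}=\tr_{A_2}$, and
\[
\rho_{A_1A_2C}=\tfrac12\textstyle\sum_{x=0,1}\op{x}_{A_2}\otimes\Phi^{x}_{A_1C},
\]
with $\Phi^0,\Phi^1$ the Bell states $\Phi^+,\Phi^-$.

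A controlled-$Z$ from $A_2$ onto $A_1$ is a local unitary on $A$ and turns $\rho$ into $\frac{\mathbbm{1}_{A_2}}{2}\otimes\Phi^+_{A_1C}$. Every extension of that state is $\Phi^+_{A_1C}\otimes\omega_{A_2E}$ and has $I(A;C|E)=2$, so $T_{\sq}(A;C|B)_\rho=1$. But $\mc{E}(\rho)=\frac12(\op{00}+\op{11})$ is separable, so $T_{\sq}(A';C|B)_{\mc{E}(\rho)}=0$.

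Nevertheless, the extension $\sigma=\frac12\sum_x\op{x}_{A_2}\otimes\op{x}_E\otimes\Phi^x_{A_1C}$ has $I(A_2;C|A_1E)_\sigma=0$, i.e.\ zero gap. The Petz map $\mc{R}_{\sigma_{AE}\otimes\sigma_C,\mc{E}}$, which is a coherent copy of $E$ into $A_2$, recovers $\sigma_{ACE}$ with fidelity $1$.

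This shows why your pullback fails. The recovery map acts on $A'BE$ and depends on that particular $E$. An optimal extension of $\mc{E}(\rho)$ (a copy of the $Z$-value) carries no information about the flag $x$, so there is nothing to pull back. Zero gap on one extension says nothing about the two infima.

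The paper's own argument (Theorem~\ref{thm:thm2} plus Petz saturation) also yields only the forward implication. The ``iff'' therefore holds one way only, unless you add a hypothesis such as $\mc{E}(\sigma)$ being optimal for $T_{\sq}(A';C|B)_{\mc{E}(\rho)}$. With that hypothesis, $T_{\sq}(A';C|B)_{\mc{E}(\rho)}=\frac12 I(A';C|BE)_{\mc{E}(\sigma)}=\frac12 I(A;C|BE)_\sigma\ge T_{\sq}(A;C|B)_\rho$, and data processing gives equality.
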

It follows that the Petz recovery map $\mc{R}^{\rm P}_{\sigma,\mc{N}}$, for $\sigma=\rho_{ABE}\otimes{\rho_C}$ and $\mc{N}=\mc{E}_{A\to A'}$, is of the form
\begin{align*}
    & \mc{R}_{\rho_{ABE}\otimes\rho_C,\mc{E}_{A\to A'}}\nonumber
     = \Pi^{\rho}_{C}\rho_{ABE}^{1/2}\mc{E}^\dag[\mc{E}(\rho_{ABE})^{-1/2}(\cdot)\mc{E}(\rho_{ABE})^{-1/2}]\rho_{ABE}^{1/2}\Pi^{\rho}_{C},
\end{align*}
where $\Pi^\rho_C$ is projector on the support of the reduced state $\rho_C$.

\end{document}